\newtheorem{theorem}{Theorem}
\theoremstyle{plain}
\newtheorem{definition}{Definition}
\newtheorem{example}{Example}
\newtheorem{lemma}{Lemma}
\newtheorem{remark}{Remark}
\numberwithin{equation}{section}
\begin{document}

\baselineskip 8mm
\parindent 9mm

\title[]
{$NP/CLP$ Equivalence: A Phenomenon Hidden Among Sparsity Models for Information Processing}

\author[author1]{Peng Jigen$^\dag$}
\author[author2]{Yue Shigang$^\ddag$}
\author[author3]{Li Haiyang$^\S$}

\thanks{$\dag$ School of Mathematics and Statistics, Xi'an
Jiaotong University, Xi'an 710049, China; and Beijing Center for Mathematics and Information Interdisciplinary Sciences, Beijing 10048, China; Email: jgpeng@mail.xjtu.edu.cn}

\thanks{$\ddag$ Corresponding author. School of Computer Science, University of Lincoln, Lin-
coln LN6 7TS, UK; Email: syue@lincoln.ac.uk; yue.lincoln@gmail.com}

\thanks{$\S$ School of Science, Xi¡¯an polytechnic University, Xi¡¯an, 710048, China; Email: fplihaiyang@126.com}

\begin{abstract}
It is proved in this paper that to every underdetermined linear system $Ax=b$ there corresponds a constant $p(A,b)>0$ such that every solution to the $l_p$-norm minimization problem also solves the $l_0$-norm minimization problem whenever $0<p<p(A,b)$. This phenomenon is named $NP/CLP$ equivalence.
\end{abstract}
\maketitle

\section{\textbf{Introduction}}

In sparse information processing, the following minimization is commonly employed to model basic sparse problems such as sparse representation and sparse recovery,
\begin{equation}\label{equ1}
(P_0)\hspace{15mm} \min\limits_{x}\|x\|_0\ \ {\rm subject\  to}\ \ Ax=b
\end{equation}
where $A$ is a real matrix of order $m\times n$ with $m<n$, $b$ is a nonzero real vector of $m$-dimension, and $\|x\|_0$ is the so-called $l_0$-norm of real vector $x$, which counts the number of the non-zero entries in $x$ \cite{bruckstein,elad,theodoridis}. Unfortunately, although the $l_0$-norm provides a very simple and essentially grasped notion of sparsity, the optimization problem $(P_0)$ is actually NP-Hard and thus quite intractable in general, due to the discrete and discontinuous nature of the $l_0$-norm. Therefore, many substitution models for $(P_0)$ have been proposed through relaxing $l_0$-norm as the evaluation function of desirability of the would-be solution to $(P_0)$ (see, for example, \cite{candes,donoho2,gorodnitsky,herzet,theodoridis}, and references therein). With the following relationship
\begin{equation}\label{equ2}
\|x\|_0=\lim\limits_{p\rightarrow 0^+}\sum^n_{i=1}|x_i|^p=\lim\limits_{p\rightarrow 0^+}\|x\|_p^p, \ \forall x=(x_1,x_2,\cdots, x_n)^T,
\end{equation}
the following minimizations seem to be among the most natural choices,
\begin{equation}\label{equ3}
 (P_p)\hspace{15mm} \min\limits_{x}\|x\|_p^p\ \ {\rm  subject\  to}\ \ Ax=b
\end{equation}
where $0<p\leq 1$. Indeed, the above optimization models, particularly for the special case when $p=1$, have gained its popularity in the literature (see, e.g., \cite{chartrand,davies,foucart,lai,saab,sun,wang,xu}), since the remarkable work done by Donoho and Huo \cite{donoho1} and Candes and Tao\cite{candes2} for $p=1$ and the initial work by Gribnoval and Nielsen \cite{gribonval} for $0<p<1$. However, with respect to these choices, a central problem is to what extent the minimizations $(P_p)$ can achieve the same results as the initial  minimization $(P_0)$. A lot of excellent theoretical work (see,e.g.,\cite{candes,donoho1,donoho3,gribonval}), together with some empirical evidence (see, e.g.,\cite{chen}), have shown that the $l_1$-norm minimization $(P_1)$ can really make exact recovery provided some conditions such as the restricted isometric property (RIP) are assumed. As an original notion RIP has received much attention, and has already been tailored to the more general case when $0<p<1$ (see, e.g.,\cite{chartrand,davies,herzet}). Among those publications mentioned, we would like to especially refer to the excellent work done by Donoho and Tanner in \cite{donoho2}, there they expose such an amazing phenomenon by means of convex geometry that for any real matrix $A$, whenever the nonnegative solution to $(P_0)$ is sufficiently sparse, it is also unique solution to $(P_1)$. That is, there exists a certain equivalence between $(P_0)$ and $(P_1)$. As the former is discrete and so NP-Hard and while the latter is continuous and equivalent to a linear programming (LP), this phenomenon was called $NP/LP$ equivalence therein. It is worthwhile to note that $(P_0)$ and $(P_1)$ are just the extremes of $(P_p)$ with respect to $p$ in the interval $(0,1)$, and that the relationship (\ref{equ3}) together with its geometric illustration in Figure 1 appears to indicate the more aggressive tendency of $(P_p)$ to drive its solution to be sparse as $p$ decreases. So, it is natural to believe that there exists more general equivalence between $(P_0)$ and $(P_p)$ for $p\in (0,1)$. Motivated by this, we in this paper aim to expose the equivalence.
\begin{figure}[h!]
  \centering
   \includegraphics[height=0.38\textwidth,width=0.48\textwidth]{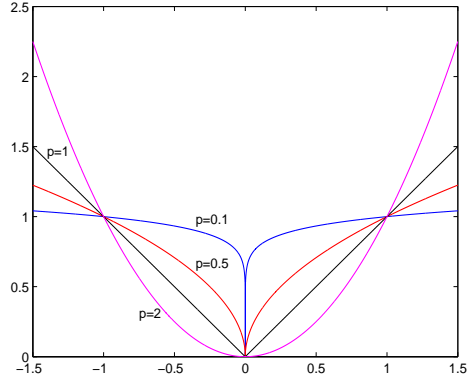}
     \caption{{\small The behaviour of $l_p$-norm for various values of $p$. See,e.g.\cite[p.38]{bruckstein}}}
    \label{fig1}
\end{figure}

The remainder of this paper is organized as follows. In Section 2 we first set up a decomposition of the background space $\mathbb{R}^n$ with respect to the system $Ax=b$, and then derive constructions and locations of solutions to $(P_p)$ based on the classical Bauer maximum principle. Section 3 is devoted to proving the main theorem, which establishes an equivalence between $(P_p)$ and $(P_0)$, named $NP/CLP$ equivalence. Finally we conclude this paper in Section 4.

For convenience, in this paper we denote by $\mathbb{R}^n$ the $n$-dimension real space, and for a vector $x\in\mathbb{R}^n$ by $x_i$ its $i^{th}$ component and by $|x|$ its module vector (i.e., $|x|=(|x_1|,|x_2|,\cdots, |x_n|)^T$). We also use $\mathbb{R}^n_+$ to represent the positive cone $\{x\in\mathbb{R}^n: x_i\geq 0,i=1,2,\cdots,n\}$.

\section{\textbf{Preliminaries: Constructions and Locations of Solutions to $(P_p)$}}

As a preliminary section, this section is devoted to explore how to construct the solutions to the problems $(P_p)$ and where the solutions locate. For the minimization problems $(P_p)$ are subject to underdetermined linear systems, we first are going to analyze the constructions of solutions to linear equation $Ax=b$, where $A$ be an $m\times n$ real matrix with $m<n$ and $b$ an $n$-dimension real vector. Without loss of generality, we assume throughout this paper that $A$ has full row rank (otherwise, we make row transformations simultaneously in both sides of the equation $Ax=b$, resulting in an equivalent equation $A_1x=\hat{b}$ with $A_1$ being of full row rank). So, by the nature of geometric aspect of $A$, we can get a construction decomposition of solutions to $Ax=b$ as below.
\begin{lemma}\label{lemma1}
Denote by $N(A)$ the null space of $A$, and $R(A^T)$ the range of $A^T$ of the transposition of $A$. Then, there holds the following space decomposition
\begin{equation}\label{equ4}
\mathbb{R}^n=N(A)\oplus R(A^T),
\end{equation}
that is, for every $x\in \mathbb{R}^n$ there uniquely exist $x_N\in N(A)$ and $x_R\in R(A^T)$ such that $x=x_N+x_R$. Therefore, every solution to $Ax=b$ can be  explicitly expressed as $x=x_N+A^T(AA^T)^{-1}b$ with $x_N\in N(A)$.
\end{lemma}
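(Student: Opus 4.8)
The plan is to establish the decomposition (\ref{equ4}) via the two standard facts that $N(A)$ and $R(A^T)$ are mutually orthogonal subspaces of $\mathbb{R}^n$ and that their dimensions sum to $n$, and then to read off the solution formula by exhibiting a particular solution that already lies in $R(A^T)$.

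First I would show $N(A)\cap R(A^T)=\{0\}$. If $x$ lies in this intersection, then $x=A^Ty$ for some $y\in\mathbb{R}^m$ while $Ax=0$; multiplying the first relation on the left by $A$ gives $AA^Ty=0$, hence $y^TAA^Ty=\|A^Ty\|^2=0$, so $x=A^Ty=0$. Next I would match dimensions: since $A$ has full row rank $m$, the rank--nullity theorem gives $\dim N(A)=n-m$, while $\dim R(A^T)=\mathrm{rank}(A^T)=\mathrm{rank}(A)=m$; as the two subspaces meet only in $0$, the sum $N(A)+R(A^T)$ is direct and has dimension $n$, hence equals $\mathbb{R}^n$. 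This is precisely (\ref{equ4}), and it yields the asserted unique splitting $x=x_N+x_R$ for every $x\in\mathbb{R}^n$.

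For the solution formula, the key point is that full row rank of $A$ makes the $m\times m$ matrix $AA^T$ invertible (if $AA^Ty=0$ then $A^Ty=0$ by the computation above, and injectivity of $A^T$ forces $y=0$). Hence $x_0:=A^T(AA^T)^{-1}b$ is well defined, satisfies $Ax_0=b$, and lies in $R(A^T)$. Given any solution $x$ of $Ax=b$, write $x=x_N+x_R$ by (\ref{equ4}); then $b=Ax=Ax_R$, so $A(x_R-x_0)=0$ with $x_R-x_0\in R(A^T)$, and the first step forces $x_R=x_0$. Therefore every solution has the stated form $x=x_N+A^T(AA^T)^{-1}b$ with $x_N\in N(A)$, and conversely any such vector clearly solves $Ax=b$.

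The whole argument is elementary linear algebra; the only place a hypothesis is genuinely used is the invertibility of $AA^T$, which is exactly where the standing assumption that $A$ has full row rank enters. So before invoking this I would make sure the harmless reduction to the full-row-rank case (via simultaneous row operations on $Ax=b$, as noted just before the lemma) is firmly in place, after which no further obstacle remains.
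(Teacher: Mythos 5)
Your proof is correct and follows essentially the same elementary linear-algebra route as the paper: both arguments hinge on full row rank making $AA^T$ invertible and on decomposing $\mathbb{R}^n$ into $N(A)$ and $R(A^T)$. The only cosmetic difference is that the paper obtains the direct sum by identifying $N(A)$ as the orthogonal complement $R(A^T)^{\perp}$ and then writes any solution as $x=x_N+A^Ty$ to solve for $y=(AA^T)^{-1}b$, whereas you get the same decomposition via trivial intersection plus a rank--nullity dimension count and then match the $R(A^T)$-component of an arbitrary solution against the particular solution $A^T(AA^T)^{-1}b$; both are complete.
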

\begin{proof}
It is clear to see that $R(A^T)$ as a linear subspace of $\mathbb{R}^n$ is closed since $A$ is of full row rank. So, in order to prove the decomposition formula  (\ref{equ4}) it suffices to show that $N(A)$ is just the orthogonal complement of $R(A^T)$, i.e., $N(A)=R(A^T)^{\perp}$. Obviously, $N(A)\subseteq R(A^T)^{\perp}$. For the converse containing, let $z\in R(A^T)^{\perp}$. By definition, we have that $\langle Az,y\rangle=\langle z,A^Ty\rangle=0$ for all $y\in\mathbb{R}^m$, which implies that $Az=0$, namely $z\in N(A)$. Thus, $R(A^T)^{\perp}\subset N(A)$ by the arbitrariness of $z\in R(A^T)^{\perp}$.

Let $x$ be a solution to $Ax=b$. Then, by (\ref{equ4}) there uniquely correspond $x_N\in N(A)$ and $y\in \mathbb{R}^m$ such that $x=x_N+A^Ty$. By definitions, $b=Ax=AA^Ty$ and hence $x=x_N+A^Ty=x_N+A^T(AA^T)^{-1}b$, as claimed.
\end{proof}

In the above proof, we have used the notation $\langle x, y\rangle$ to represent the inner product of real vectors $x$ and $y$. For sake of convenience, we will adopt it hereafter.

\begin{remark}\label{re1}
The space decomposition (\ref{equ4}) implies that each $x_N\in N(A)$ can be written into the form $x_N=h-A^T(AA^T)^{-1}Ah$ with $h=x_N+A^Tb$, and so that the null of $A$ has such a parameterized expression as $N(A)=\{h-A^T(AA^T)^{-1}Ah: h\in\mathbb{R}^n\}$. As a result, the constrained minimizations $(P_p)$ for all  $p\geq 0$ can be equivalently transformed into the global minimizations as follows
\begin{equation}\label{equ5}
\min\limits_{h\in \mathbb{R}^n}\|h+A^T(AA^T)^{-1}(b-Ah)\|_p^p.
\end{equation}
On the other hand, it follows by Lemma 1 that for all $p>0$ the optimal values to $(P_p)$ are upper bounded by $\|A^T(AA^T)^{-1}b\|^p_p$, i.e.,$\|x\|^p_p\leq\|A^T(AA^T)^{-1}b\|^p_p$ for every solution $x$ to $(P_p)$. This means that the $(P_p)$s are actually constrained within, addition to $Ax=b$, a bounded subset, for example, the $l^\infty$ ball $B_\infty(r):=\{x\in\mathbb{R}^n: |x_i|\leq r, i=1,2,\cdots, n\}$ with $r=n\cdot\sup\limits_{1\leq i\leq n}|(A^T(AA^T)^{-1}b)_i|$. That is, $(P_p)$ with $p>0$ is equivalent to the following minimization problem
\begin{equation}\label{equ3-add}
 (P^\prime_p)\hspace{15mm} \min\limits_{Ax=b, x\in B_\infty(r)}\|x\|_p^p.
\end{equation}
It is valuable to note that the optimal problem (\ref{equ3-add}) is subject to a special constraint set, which is called typical polytope in term of the following definition.
\end{remark}

\begin{definition}\cite{goemans}\label{def1} A polyhedron $G$ in $\mathbb{R}^n$ is a subset as the intersection of finitely many halfspacs, where halfspace refers to a set of the form $\{x\in\mathbb{R}^n: \langle h,x\rangle\leq \gamma\}$ for some vector $h\in\mathbb{R}^n$ and real number $\gamma\in\mathbb{R}$. Moreover, a polyhedron is called polytope if it is bounded.
\end{definition}

By definition, a polyhedron can be compactly expressed as $G=\{x\in \mathbb{R}^n: Hx\leq g\}$ for some matrix $H$ and vector $g$ of a certain dimension, where $Hx\leq g$ means that the corresponding inequalities hold for all scale components (i.e., $\langle H_i, x\rangle\leq g_i$, where $H_i$ is the $i^{th}$ row of $H$. Below we always adopt this notation). Moreover, it is clear that a polytope is closed and convex, and the intersection of any finitely many polyhedrons is a polytope as long as some of those polyhedrons is bounded. Obviously, the $l_\infty$ ball $B_\infty(r)$ (see Remark 1), as well as the set of solutions to $Ax=b$, is a polyhedron, and so its intersection $\Omega=\{x\in\mathbb{R}^n: Ax=b, |x_i|\leq r, i=1,2,\cdots,n\}$ is a polytope. Indeed, if let $e_i$ denote the $i^{th}$ normal base vector of $\mathbb{R}^n$ and define a $2(n+m)\times n$ matrix $H$ as
\begin{equation}\label{equ6}
H=(e_1, -e_1; e_2,-e_2;\cdots; e_n, -e_n; A^T, -A^T)^T
\end{equation}
and a $2(n+m)$-dimension real vector $g$ as
\begin{equation}\label{equ7}
g=(r,r;r,r;\cdots;r,r; b^T,-b^T)^T,
\end{equation}
then the intersection set $\Omega$ is compactly expressed by $\Omega=\{x\in\mathbb{R}^n: Hx\leq g\}$, a standard polytope.

According to convex polytope theory \cite{grunbaum}, we know that polytope possesses such an important characteristic that it can be generated by convexifying a set of finite number of points, which are no other than extreme points. In convexity analysis, extreme points always play vital roles. For example, the optimal solution to a linear programming, whose constraint set is generally a polytope (or polyhedron), can be achieved at some extreme point (see, e.g.,\cite{grunbaum,simon} and references therein). Below we first recall the definition of extreme point.

\begin{definition}\cite[Chapt.8]{simon}\label{def2}
Let $\Omega$ be a set of a vector space, and let $x^*\in\Omega$. Then, $x^*$ is called extreme point of $\Omega$ if it does not lie in the interior of any line segment entirely contained in $\Omega$ $($i.e., $x^*$ necessarily coincides with $x_1$ or $x_2$ whenever $x^*=\alpha x_1+(1-\alpha)x_2$ with $x_1, x_2\in\Omega$ and $\alpha\in [0,1])$.
\end{definition}

The famous Klein-Milman theorem \cite{aliprantis} tells that every compact convex subset of a locally convex topological vector space necessarily possesses extreme point(s) and it is just the closed convex hull of those extreme points. As a particular case, Minkowski-Caratheodory theorem \cite[Chap.8, p.126]{simon} states that every point from a compact convex subset of $\mathbb{R}^n$ is a convex combination of at most $n +1$ extreme points. It is well-known that a polytope possesses at most finite number of extreme points (indeed, if $G=\{x\in\mathbb{R}^n: Hx\leq g\}$ is a polytope, then it has at most $2^m$ extreme poins, where $m$ is the dimension of $g$. See, e.g.,\cite{murty}). We previously mentioned that a linear programming can attain its optimal value at some extreme point of the constraint set. In fact, H. Bauer \cite{bauer} had proved this assertion early in 1960 for a general convex maximization problem with compact constraint set. We state it as a lemma below.

\begin{lemma}(Bauer's Maximum Principle \cite[Chapt.7, p.298]{aliprantis})\label{lemma2}
If $G$ is a compact convex subset of locally convex Hausdorff vector space, then every continuous convex functional on $G$ has a maximizer that is an extreme point of $G
$.
\end{lemma}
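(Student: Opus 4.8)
The plan is to reduce the statement to the Krein--Milman theorem by passing to the set of maximizers. First I would invoke compactness of $G$ and continuity of the convex functional $f$ to conclude that $f$ attains its supremum on $G$; set $M=\max_{x\in G}f(x)$ and let $F=\{x\in G: f(x)=M\}$ denote the (nonempty) set of maximizers. Being the preimage of the closed singleton $\{M\}$ under the continuous map $f$, the set $F$ is closed in $G$, hence itself compact.

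Next I would verify that $F$ is convex. Given $x_1,x_2\in F$ and $\alpha\in[0,1]$, convexity of $f$ yields $f(\alpha x_1+(1-\alpha)x_2)\le \alpha f(x_1)+(1-\alpha)f(x_2)=M$; since $M$ is the maximal value of $f$ on $G$, equality must hold, so $\alpha x_1+(1-\alpha)x_2\in F$. Thus $F$ is a nonempty compact convex subset of the locally convex Hausdorff space, and the Krein--Milman theorem \cite{aliprantis} guarantees that $F$ possesses at least one extreme point $x^*$.

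Finally I would show that $x^*$ is in fact an extreme point of the whole set $G$, which finishes the proof since $x^*$ is already a maximizer of $f$ by construction. Suppose $x^*=\alpha x_1+(1-\alpha)x_2$ with $x_1,x_2\in G$ and $\alpha\in(0,1)$. Then $M=f(x^*)\le \alpha f(x_1)+(1-\alpha)f(x_2)$ with $f(x_1)\le M$ and $f(x_2)\le M$, and this is only possible if $f(x_1)=f(x_2)=M$, i.e. $x_1,x_2\in F$. But $x^*$ is an extreme point of $F$ by Definition \ref{def2}, so $x^*=x_1$ or $x^*=x_2$; hence no nontrivial convex representation of $x^*$ inside $G$ exists, and $x^*$ is an extreme point of $G$.

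The only genuinely nontrivial ingredient is the Krein--Milman theorem itself; everything else is a direct manipulation of the definitions of convexity and of extreme point. The step deserving the most care is the deduction $f(x_1)=f(x_2)=M$: it relies essentially on $\alpha$ being \emph{strictly} between $0$ and $1$ and on both values being bounded above by $M$, so that a strict inequality at either point would contradict $f(x^*)=M$.
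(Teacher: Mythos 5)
The paper itself offers no proof of this lemma --- it is quoted verbatim from Aliprantis and Border as a known result --- so there is no in-paper argument to compare with; judged on its own, your proposal contains a genuine error. The fatal step is the claim that the set of maximizers $F=\{x\in G: f(x)=M\}$ is convex. Your justification reads: convexity gives $f(\alpha x_1+(1-\alpha)x_2)\le \alpha M+(1-\alpha)M=M$, and ``since $M$ is the maximal value, equality must hold.'' But the inequality $f(\alpha x_1+(1-\alpha)x_2)\le M$ is satisfied by \emph{every} point of $G$ and forces nothing. For a convex $f$ it is the set of \emph{minimizers} that is convex; the set of maximizers generally is not. Concretely, take $f(x)=x^2$ on $G=[-1,1]$: $f$ is continuous and convex, $M=1$, and $F=\{-1,1\}$, which is not convex since $f(0)=0<1$. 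Because $F$ need not be convex, you cannot apply Krein--Milman to $F$, and the proof collapses at that point.

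What your argument does establish correctly (your third paragraph) is that $F$ is an \emph{extreme subset} (a face) of $G$: if $\alpha x_1+(1-\alpha)x_2\in F$ with $x_1,x_2\in G$ and $\alpha\in(0,1)$, then the chain $M=f(x^*)\le\alpha f(x_1)+(1-\alpha)f(x_2)\le M$ squeezes both values to $M$, so $x_1,x_2\in F$. The standard proof of Bauer's principle runs through exactly this face property, but it closes differently: either by the lemma (normally proved inside the proof of Krein--Milman itself, via Zorn's lemma applied to the family of nonempty closed extreme subsets contained in $F$) that every nonempty compact extreme subset of a compact convex set contains an extreme point of that set; or, alternatively, by applying Krein--Milman to the compact convex set $\overline{\mathrm{co}}(F)\subseteq G$ and invoking Milman's partial converse to conclude that its extreme points lie in the closed set $F$, after which your face argument upgrades such a point to an extreme point of $G$. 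Either repair preserves most of your structure, but the convexity claim as written must be removed.
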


Recall that a functional $f$ on a convex set $G$ is said to be concave if there holds the inequality $\alpha f(x)+(1-\alpha)f(y)\leq f(\alpha x+(1-\alpha y))$ for all $x,y\in G$ and $\alpha\in (0,1)$. In this terminology, Lemma 2 means that every continuous concave real functional defined on a compact convex subset of locally convex Hausdorff vector space achieves its minimum value at some extreme points. By definitions, a linear function is convex but also can be considered as being concave. This is just why a linear programming problem always reaches its optimal value at some extreme point, whether it is of minimization problem or maximization one. Consider the minimization problem $(P_1)$ under the additional nonnegative constrains $x\geq 0$, or equivalently the following linear programming
\begin{equation}\label{equ8}
(LP)\hspace{15mm} \min\limits_{x} \langle \mathbf{1}, x\rangle \ \ {\rm subject\  to}\ \ Ax=b, x\geq 0,
\end{equation}
where $\mathbf{1}\in \mathbb{R}^n$ is the vector whose components are all one. Due to the nonnegativity assumption on the variables $x$, it is easy to show that the constraint set is actually bounded and hence a polytope. So by Lemma 2 and Minkowski-Caratheodory theorem it is clear that the minimization problem $(P_1)$ is solvable by searching for the extreme points of the constraint set. However, the extreme point set possesses up to $2^m$ members, so such searching throughout all extreme points is a overwhelming task for large $m$, which may be an implicit reason why $(P_0)$ is equivalent to $(P_1)$ in some cases.

Before close this section, we present a further remark on Lemma 2, which will be useful in proof of our main result.

\begin{remark}\label{re3}
Recall that a concave function $f$ is said to be strict if the equality in $\alpha f(x)+(1-\alpha)f(y)\leq f(\alpha x+(1-\alpha y))$ is available only for $x=y$. In this meaning, it is easy to check that every minimizer of a strictly concave function on a compact convex set $G$ necessarily exists at a certain extreme point of $G$. Obviously, the function $f_p(x)=\|x\|^p_p$ with $0<p<1$ is strictly concave in the positive cone $\mathbb{R}^n_+$.
\end{remark}

\section{\textbf{Main Result: Equivalence Between Minimizations $(P_0)$ and $(P_p)$}}

With the preparations given in the last section, we in this section establish an equivalence between $(P_0)$ and $(P_p)$. To this end, we first turn to the minimization problems $(P_p)$ for $0<p\leq 1$ and present the following lemma.

\begin{lemma}\label{le3}
Let $A$ be an $m\times n$ real matrix of full row rank, and $b\in\mathbb{R}^n$. Given $r>0$ and define a subset of $\mathbb{R}^n$ as
\begin{eqnarray}\label{equ9}
&&G(r)=\{z\in\mathbb{R}^n: \forall i=1,2,\cdots,n, 0\leq z_i\leq r, \text{and}\nonumber
\\ &&\hspace{15mm}\text{there is a solution}\  x\  \text{to } Ax=b \ \text{such that}\ |x|\leq z\},
\end{eqnarray}
where $|x|$ stands for the module vector of $x$. Then, $G(r)$ is a polytope in $\mathbb{R}^n$.
\end{lemma}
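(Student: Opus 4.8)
The plan is to realize $G(r)$ as the image, under a coordinate projection, of an explicitly constructed polytope living in a doubled space $\mathbb{R}^n\times\mathbb{R}^n$, and then to invoke the standard fact that the image of a polytope under a linear map is again a polytope. So the first task is to linearize the defining condition of $G(r)$. The requirement that there exist a solution $x$ to $Ax=b$ with $|x|\le z$ is, component-wise, equivalent to the linear inequalities $x_i\le z_i$ and $-x_i\le z_i$ for all $i$; note these already force $z_i\ge 0$, so the constraint $z\ge 0$ in the definition of $G(r)$ is subsumed.

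Introducing the auxiliary variable $x\in\mathbb{R}^n$, I would therefore consider
\[
P(r)=\{(x,z)\in\mathbb{R}^n\times\mathbb{R}^n:\ Ax=b,\ x\le z,\ -x\le z,\ z\le r\mathbf{1}\},
\]
where $\mathbf{1}$ is the all-ones vector. Since the equation $Ax=b$ splits into the two inequalities $Ax\le b$ and $-Ax\le -b$, the set $P(r)$ is an intersection of finitely many halfspaces of $\mathbb{R}^{2n}$, i.e. a polyhedron in the sense of Definition \ref{def1}. It is also bounded: the constraints give $0\le z_i\le r$, and then $-z_i\le x_i\le z_i$ forces $|x_i|\le r$, so $P(r)\subseteq B_\infty(r)\times B_\infty(r)$. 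Hence $P(r)$ is a polytope. Next I would check that $G(r)=\pi(P(r))$, where $\pi:\mathbb{R}^n\times\mathbb{R}^n\to\mathbb{R}^n$ is the projection $(x,z)\mapsto z$: indeed $z\in G(r)$ iff $0\le z_i\le r$ for all $i$ and some $x$ satisfies $Ax=b$ and $|x|\le z$, and unpacking $|x|\le z$ as $x\le z,\ -x\le z$ (which also gives $z\ge0$) shows this is exactly the assertion that $(x,z)\in P(r)$ for some $x$.

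It then remains to conclude that $\pi(P(r))$ is a polytope. This can be done in two equivalent ways. One may eliminate the variable $x$ from the finite linear system defining $P(r)$ by Fourier–Motzkin elimination, obtaining a finite system of linear inequalities in $z$ alone whose solution set is precisely $\pi(P(r))$; being bounded (it lies in the box $[0,r]^n$), this solution set is a polytope. Alternatively, using the material already recalled in Section 2: the compact convex polytope $P(r)$ has only finitely many extreme points, so by the Krein–Milman/Minkowski theorem \cite{aliprantis} $P(r)=\mathrm{conv}(\mathrm{ext}\,P(r))$; since a linear map commutes with the convex-hull operation, $\pi(P(r))=\mathrm{conv}\big(\pi(\mathrm{ext}\,P(r))\big)$ is the convex hull of finitely many points of $\mathbb{R}^n$, hence a polytope by convex polytope theory \cite{grunbaum}. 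Either route finishes the proof.

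The main obstacle is precisely this last step — passing from the lifted polyhedron to its projection — because "polytope" has been defined here through its halfspace (H-)representation, while the projection is most naturally described as the convex hull of finitely many points (a V-representation). Reconciling the two requires either Fourier–Motzkin elimination or the Minkowski–Weyl equivalence of the two representations; once one grants that $\pi$ maps polytopes to polytopes, everything else is the routine bookkeeping of translating the module-vector constraint $|x|\le z$ into the linear inequalities $x\le z$, $-x\le z$ and verifying that the resulting system is bounded.
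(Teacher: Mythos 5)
Your proof is correct and follows essentially the same route as the paper: both lift $G(r)$ to a polyhedron in $\mathbb{R}^n\times\mathbb{R}^n$ by linearizing the condition $|x|\le z$ into $x\le z$ and $-x\le z$, and then project onto the $z$-coordinates. The only differences are cosmetic --- the paper parameterizes the solution set of $Ax=b$ via $h\mapsto h+A^T(AA^T)^{-1}(b-Ah)$ from Lemma \ref{lemma1} instead of carrying $Ax=b$ as explicit constraints on an auxiliary variable $x$, and it simply cites the projection property of polyhedra where you (more carefully) justify that step via Fourier--Motzkin elimination or the Minkowski--Weyl equivalence.
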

\begin{proof}
Obviously, $G(r)$ is bounded and closed. Below we prove that $G(r)$ is a polyhedron. According to Lemma 1 (combined with Remark 1), we know that $x\in\mathbb{R}^n$ solves the system $Ax=b$ if and only if it bears the form $x=h-A^T(AA^T)^{-1}Ah+A^T(AA^T)^{-1}b$ for some $h\in\mathbb{R}^n$. Denote by $\Lambda$ the set of those vectors $(z^T,h^T)^T$ of $\mathbb{R}^n\times\mathbb{R}^n$ satisfying the following two inequalities
\begin{equation}\label{equ10}
-z-(I-A^T(AA^T)^{-1}A)h\leq A^T(AA^T)^{-1}b,
\end{equation}
and
\begin{equation}\label{equ11}
-z+(I-A^T(AA^T)^{-1}A)h\leq -A^T(AA^T)^{-1}b.
\end{equation}
Obviously, $\Lambda$ is a polyhedron of the product space $\mathbb{R}^n\times\mathbb{R}^n$. Let $P$ be the projection from $\mathbb{R}^n\times\mathbb{R}^n$ to the first part, i.e., $P(z,y)=z$. Then, by the projection property of polyhedron \cite{goemans} we know that $P(\Lambda)$ is also a polyhedron of $\mathbb{R}^n$. Therefore, to close the proof, we only need to show that $G(r)=P(\Lambda)\cap B^+_\infty(r)$, where $B^+_\infty(r)$ stands for the subset of nonnegative elements of $B_\infty(r)$.

In light of the above discussion, it is clear that $G(r)\subseteq P(\Lambda)\cap B^+_\infty(r)$. For the converse containing relation, let $z\in P(\Lambda)\cap B_\infty^+(r)$. Then, $0\leq z_i\leq r$ for all $i=1,2,\cdots,n$, and there corresponds a $h\in \mathbb{R}^n$ such that $(z^T,h^T)^T\in \Lambda$, that is, the pair $(z, h)$ satisfies both the inequalities (\ref{equ10}) and (\ref{equ11}). Let $x=h+A^T(AA^T)^{-1}(b-Ah)$. Then, it is a routine work to check that $x$ solves $Ax=b$, and moreover it follows from the inequalities (\ref{equ10}) and (\ref{equ11}) that $|x|\leq z$. So $z\in G(r)$. The proof is therefore completed.
\end{proof}

Figure 2 displays basic shapes of $G(r)$ in the plane $\mathbb{R}^2$, except several degenerate cases (e.g., the segment between these two points $(0,r)^T$ and $(r,r)^T$).

\begin{figure}[h!]
 \centering
  \includegraphics[height=0.25\textwidth,width=0.8\textwidth]{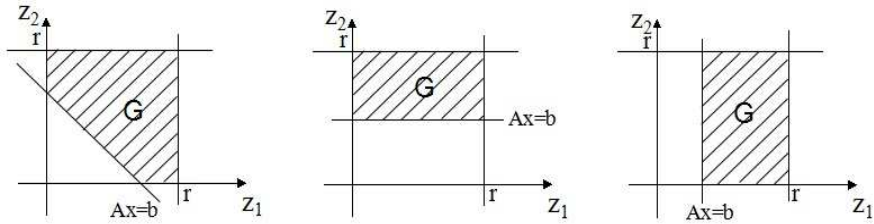}\\
 \caption{Shapes of the polytope $G(r)$ defined as in (\ref{equ9}) in the plane.}\label{fig2}
\end{figure}

In order to prove the following theorem, the main result of this paper, we make a remark on the solutions to $(P_0)$. Due to the integer-value virture of $l_0$-norm, the optimal value of $(P_0)$ is certainly achieved in a bounded set. That is, there exists a constant $r_0>0$ such that
\begin{equation}\label{equ11-add}
\min\limits_{Ax=b}\|x\|_0=\min\limits_{Ax=b, x\in B_\infty(r_0)}\|x\|_0.
\end{equation}
(Geometrically, it appears to be true that the optimal solutions to $(P_0)$ locate at those intersections of the plane $Ax=b$ with coordinate axises or coordinate planes).

\begin{theorem}\label{the1}
There exists a constant $p(A,b)>0$ such that, whenever $0<p<p(A,b)$, every solution to $(P_p)$ also solves $(P_0)$.
\end{theorem}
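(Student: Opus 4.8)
The plan is to prove the sharper statement that for every $p\in(0,1)$ every solution of $(P_p)$ lies in a fixed finite set $\mathcal{B}$ of ``basic'' solutions of $Ax=b$ --- those whose support indexes a linearly independent set of columns of $A$ --- and then to let $p\to 0^+$. First I would note that $\|\cdot\|_p^p$ is coercive on $\mathbb{R}^n$, since $\|x\|_p^p\ge\|x\|_\infty^p\ge(\|x\|_2/\sqrt{n})^p$; hence $(P_p)$ is solvable and its solution set lies in some ball $B_\infty(\rho)$. Setting $r:=\rho+1$, I would work with the polytope $G(r)$ of Lemma~\ref{le3}. The key observation is that if $x^*$ solves $(P_p)$ then $|x^*|$ minimizes $z\mapsto\|z\|_p^p$ over $G(r)$: indeed $|x^*|\in G(r)$, and any $z\in G(r)$ dominates the module of some solution $x$ of $Ax=b$, so $\|z\|_p^p\ge\|x\|_p^p\ge\|x^*\|_p^p=\||x^*|\|_p^p$. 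Since $\|\cdot\|_p^p$ is strictly concave on $\mathbb{R}^n_+\supseteq G(r)$ (Remark~\ref{re3}), Bauer's maximum principle (Lemma~\ref{lemma2}, in the strictly-concave form recorded in Remark~\ref{re3}) then forces $|x^*|$ to be an extreme point of $G(r)$.

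Next I would show that $S:=\mathrm{supp}(x^*)$ indexes linearly independent columns of $A$. If not, there is $v\ne 0$ with $\mathrm{supp}(v)\subseteq S$ and $Av=0$; for all small $t>0$ the vectors $|x^*\pm tv|$ lie in $G(r)$ (they are modules of solutions of $Ax=b$ and stay in $B_\infty(r)$, as $\||x^*\pm tv|\|_\infty\le\rho+t\|v\|_\infty<r$), while $|x^*+tv|+|x^*-tv|=2|x^*|$ coordinatewise (because $|tv_i|<|x^*_i|$ for $i\in S$ and $v_i=0$ for $i\notin S$) and $|x^*+tv|\ne|x^*-tv|$; so $|x^*|$ is the midpoint of two distinct points of $G(r)$, contradicting extremality. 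Hence $|S|\le m$ and $x^*$ is the unique solution of $Ax=b$ supported in $S$, so every solution of $(P_p)$ lies in
\[
\mathcal{B}:=\{x\in\mathbb{R}^n:\ Ax=b,\ \text{the columns of }A\text{ indexed by }\mathrm{supp}(x)\text{ are linearly independent}\},
\]
a finite set depending only on $A$ and $b$. I would also note that $k^*:=\min_{Ax=b}\|x\|_0$ equals $\min_{y\in\mathcal{B}}\|y\|_0$, since any sparsest solution likewise has linearly independent support (else a displacement along a null-space vector supported inside it zeros a coordinate, contradicting minimality of $k^*$). Finally I would pass to the limit $p\to 0^+$: $\mathcal{B}$ is finite and $\|y\|_p^p=\sum_{i\in\mathrm{supp}(y)}|y_i|^p\to\|y\|_0$ for each $y\in\mathcal{B}$, so there is $\delta\in(0,1]$ with $|\,\|y\|_p^p-\|y\|_0\,|<1/4$ for all $y\in\mathcal{B}$ and all $p\in(0,\delta)$. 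Then, picking $\bar x\in\mathcal{B}$ with $\|\bar x\|_0=k^*$, for $0<p<\delta$ any solution $x^*$ of $(P_p)$ (which is in $\mathcal{B}$) satisfies $\|x^*\|_0<\|x^*\|_p^p+1/4\le\|\bar x\|_p^p+1/4<k^*+1$, and since $\|x^*\|_0$ is an integer not less than $k^*$, we get $\|x^*\|_0=k^*$, i.e.\ $x^*$ solves $(P_0)$. Thus $p(A,b):=\delta$ works.

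The main obstacle is the reduction to a finite set: the coordinates of an $l_p$-minimizer could a priori grow without bound as $p\to 0^+$, and it is exactly the convex-geometric input of Section~2 --- $G(r)$ being a polytope (Lemma~\ref{le3}) together with Bauer's principle applied to the strictly concave $\|\cdot\|_p^p$ (Remark~\ref{re3}) --- that confines them to finitely many vertices. (Equivalently and more elementarily: along any null-space line contained in the support of a candidate minimizer, $\|\cdot\|_p^p$ is a strictly concave function of one real variable, which cannot be minimized at an interior point, forcing the support to index independent columns.) By comparison, the coercivity that makes the minimum attainable and the final $\varepsilon$-estimate are routine.
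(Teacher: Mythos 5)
Your proof is correct, and it shares its engine with the paper's: both arguments hinge on Lemma 3 (that $G(r)$ is a polytope) together with the strict concavity of $\|\cdot\|_p^p$ on the nonnegative orthant (Remark 2) to force the module of any $(P_p)$-minimizer to be an extreme point of $G(r)$. Where you diverge is in how you finish. The paper stays quantitative: it introduces $r_m(A,b)$, the smallest nonzero coordinate over the finitely many extreme points of $G(r_1)$, derives the chain $\|x_p\|_0\leq (r/r_m)^p\min_{Ax=b}\|x\|_0$, and extracts the explicit threshold (3.8) from the integer gap. You instead push the extreme-point information one step further, showing by a null-space perturbation that the support of any $(P_p)$-minimizer indexes linearly independent columns of $A$, so all minimizers live in the finite, $p$-independent set $\mathcal{B}$ of basic solutions; then the pointwise convergence $\|y\|_p^p\to\|y\|_0$ on this finite set plus the integer gap closes the argument. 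What the paper's route buys is an explicit (in principle computable) lower bound for $p(A,b)$; what yours buys is robustness: your choice $r=\rho+1$ with $\rho$ allowed to depend on $p$ means you never need the solutions of $(P_p)$ to be bounded uniformly in $p$, whereas the paper's proof leans on the claim from Remark 1 that all $(P_p)$-minimizers lie in the fixed ball $B_\infty(r_1)$ with $r_1=n\sup_i|(A^T(AA^T)^{-1}b)_i|$ --- a bound that the estimate $\|x\|_p^p\leq\|A^T(AA^T)^{-1}b\|_p^p$ only yields with a constant of order $n^{1/p}$, not $n$, so your finite-set reduction is actually the cleaner way to discharge that step. The one cosmetic loss is that your $\delta$ is nonconstructive, but combining your observation that minimizers are basic solutions with the paper's $r_m$-scaling inequality would recover an explicit constant as well.
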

\begin{proof}
Let $G(r_1)$ be defined as in (\ref{equ9}) with $r_1=n\cdot\sup\limits_{1\leq i\leq n}|(A(AA^T)^{-1}b)_i|$. Then, by Lemma 3 we know that $G(r_1)$ is a polytope and hence has finite number of extreme points. Denote by $E(G(r_1))$ the set of extreme points of $G(r_1)$, and define a constant $r_m(A,b)$ as follows
\begin{equation}\label{equ12}
r_m(A,b)=\min\limits_{z\in E(G(r_1)),z_i\not=0, 1\leq i\leq n}z_i.
\end{equation}
Clearly, the defined constant $r_m(A,b)$ is finite and positive (i.e., $0<r_m(A,b)<\infty$), due to the finiteness of $E(G(r_1))$. (Here we draw attention to that $r_m(A,b)$ only depends on $A$ and $b$. However, for simplicity we may draw off $A,b$ from $r_m(A,b)$ in the following discussion).

Let $r_0$ be given as in (\ref{equ11-add}), and $r=\max\{r_0, r_1\}$. Then, we have $G(r_1)\subset G(r)$, and similar to (\ref{equ3-add}) we have
\begin{equation}\label{equ12-add}
\min\limits_{Ax=b}\|x\|_p^p=\min\limits_{Ax=b, x\in B_\infty(r_1)}\|x\|_p^p=\min\limits_{Ax=b, x\in B_\infty(r)}\|x\|_p^p.
\end{equation}
and
\begin{equation}\label{equ121-add}
\min\limits_{Ax=b}\|x\|_0=\min\limits_{Ax=b, x\in B_\infty(r_0)}\|x\|_0=\min\limits_{Ax=b, x\in B_\infty(r)}\|x\|_0.
\end{equation}

Any given a solution $x_p$ to $(P_p)$. By Remark 1 we have $x_p\in B_\infty(r_1)$. Now, let $z_p=|x_p|$. Then, by Lemma 3 it is clear to see that $z_p\in G(r_1)$ and solves the following minimization problem
\begin{equation}\label{equ13}
(P_p^+)\hspace{15mm}\min\limits_{z\in G(r_1)}\|z\|^p_p=\sum^n_{i=1} z_i^p.
\end{equation}
Moreover, since $f(z)=\|z\|^p_p$ is strictly concave in $G(r_1)$, it follows by Lemma 2 and its Remark 2 that $z_p\in E(G(r_1))$. Hence, by noticing the fact that the mapping $p\rightarrow x^p$ is decreasing for $x\in (0,1)$ and while increasing for $x>1$, we have
\begin{eqnarray*}\label{equ13}
\|x_p\|_0&=&\|z_p\|_0=\lim\limits_{q\downarrow 0}\big[z^q_{p1}+z^q_{p2}+\cdots +z^q_{pn}\big]\\
&=&\lim\limits_{q\downarrow 0}\bigg[\bigg(\frac{z_{p1}}{r_m}\bigg)^q+\bigg(\frac{z_{p2}}{r_m}\bigg)^q+\cdots +\bigg(\frac{z_{pn}}{r_m}\bigg)^q\bigg]\\
&\leq&\bigg(\frac{z_{p1}}{r_m}\bigg)^p+\bigg(\frac{z_{p2}}{r_m}\bigg)^p+\cdots +\bigg(\frac{z_{pn}}{r_m}\bigg)^p\\
&=&r_m^{-p}\min\limits_{z\in G(r_1)}\|z\|_p^p=r_m^{-p}\min\limits_{Ax=b,x\in B_\infty(r_1)}\|x\|_p^p\\
&=&r_m^{-p}\min\limits_{Ax=b,x\in B_\infty(r)}\|x\|_p^p=\bigg(\frac{r}{r_m}\bigg)^p\min\limits_{Ax=b,x\in B_\infty(r)}\|r^{-1}x\|_p^p\\
&\leq&\bigg(\frac{r}{r_m}\bigg)^p\min\limits_{Ax=b,x\in B_\infty(r)}\|r^{-1}x\|_0\\
&=&\bigg(\frac{r}{r_m}\bigg)^p\min\limits_{Ax=b,x\in B_\infty(r)}\|x\|_0=\bigg(\frac{r}{r_m}\bigg)^p\min\limits_{Ax=b}\|x\|_0,
\end{eqnarray*}
where the last equality follows from (\ref{equ121-add}). Because $\|x_p\|_0$ is an integer number, from the above inequality it follows that  $\|x_p\|_0=\min\limits_{Ax=b}\|x\|_0$ (that is, $x_p$ solves $(P_0)$) when
\begin{equation}\label{equ14}
\bigg(\frac{r}{r_m}\bigg)^p\min\limits_{Ax=b}\|x\|_0<\min\limits_{Ax=b}\|x\|_0+1
\end{equation}
Obviously, the above inequality is true whenever
\begin{equation}\label{equ15}
p<\frac{\ln\bigg(\min\limits_{Ax=b}\|x\|_0+1\bigg)-\ln\bigg(\min\limits_{Ax=b}\|x\|_0\bigg)}{\ln r-\ln r_m}.
\end{equation}
Therefore, denoting by $p(A,b)$ the right side of the above inequality, we conclude that when $0<p<p(A,b)$, every solution $x_p$ to $(P_p)$ also solves $(P_0)$. The proof is thus completed.
\end{proof}

\begin{remark}\label{re3}
From the proof it is clear to see that the parameters $r_0$ and $r_1$ are just used to bound the ranges of solutions to $(P_0)$ and $(P_1)$ respectively. So, to obtain a better $p(A,b)$ we can replace the number $r$ in the inequality (\ref{equ15}) with another number that bounds the solutions to $(P_p)$ for all $0\leq p\leq 1$.
\end{remark}

As is well known, $(P_0)$ is combinatorial and NP-hard in general, and while $(P_p)$ for $p>0$ is continuous and perhaps is polynomially computable. In this sense, the significance of the theorem lies in that it really bridges the gap between a combinatorial problem and a continuous one. To highlight the NP nature of $(P_0)$ and the continuity feature of $(P_p)$, we name the phenomenon stated by Theorem 1 ``$NP/CLP$ equivalence". Correspondingly, we call the maximal $p(A,b)$ ``$NP/CLP$ equivalence constant", and denote it by $p^*(A,b)$.

Obviously, it is important to evaluate the $NP/CLP$ equivalence constant $p^*(A,b)$ for us to choose an appropriate model $(P_p)$ substituting for $(P_0)$. However, this is a hard and difficult work, although the inequality (\ref{equ15}) can be used to derive a rudimentary estimation. With the relationship (\ref{equ2}), perhaps someone believes that the constant $p(A,b)$ is determined by some single value (that is, if $(P_{p_{1}})$ is equivalent to $(P_0)$, so are all $(P_{p})$ for $p\leq p_1$). Nevertheless, the following example shows that it is not true.

\begin{example}
Consider the minimization $(P_p)$ with respect to the underdetermined linear system $Ax=b$ with
\begin{equation}
A=\left(
  \begin{array}{cccc}
    -\frac{20}{29} & 1 & \frac{31}{87} & 0 \\
    0 & 1 & \frac{8}{15} & 1 \\
    \frac{60}{29} & 0 & \frac{463}{435} & -1 \\
  \end{array}
\right),\ \ b=(1,2,3)^T.
\end{equation}
It is easy to show that the solutions $x=(x_1,x_2,x_3,x_4)^T$ have the following parameterized form
\begin{equation}\label{equ16}
x_1=t, x_2=-\frac{4}{27}+\frac{40}{27}t, x_3=\frac{29}{9}\bigg(1-\frac{20}{29}t\bigg), x_4=\frac{58}{135}\bigg(1-\frac{20}{29}t\bigg)
\end{equation}
where the parameter $t$ varies in $\mathbb{R}$. Hence, it can be the $l_p$-norm is computed by the following function of $t$,
 \begin{equation}\label{equ17}
 \|x\|_p^p=|t|^p+\bigg|-\frac{4}{27}+\frac{40}{27}t\bigg|^p+\bigg|\frac{29}{9}(1-\frac{20}{29}t)\bigg|^p+\bigg|\frac{58}{135}(1-\frac{20}{29}t)\bigg|^p.
 \end{equation}
Obviously, the unique solution to $(P_0)$ is $x_0=(1.45,2,0,0)^T$ (with respect to $t=1.45$), and all the solutions to $(P_p)$ for $0\leq p\leq 1$ exist in the set $B_\infty(2)$. It is also easy to test that the constant $r_m(A,b)$ defined as in (\ref{equ12}) equals to $0.1$. Hence, by the inequality (\ref{equ15}) we can derive that $p^*(A,b)\geq 0.135$. So, by Theorem 1 we know that $x_0=(1.45,2,0,0)^T$ is the unique solution to $(P_p)$ for $0<p<0.135$. From Figure 3 it is seen that the $l_{0.08}$-norm and $l_{0.135}$-norm reach their minimums at $t=1.45$, which corresponds to the unique solution $x_0=(1.45,2,0,0)^T$.

Now we consider three cases when $p=0.8, 0.95$, $1$, respectively. The behaviours of $\|x\|_p^p$ as the functions of $t$ are displayed in Figure 3 for those $p$. By the formula (\ref{equ17}) it is easy to test that $x_{0.8}=(0.1,0,3,0.4)^T$ solve $(P_{0.8})$, and while $x_{0.95}=x_1=(1.45,2,0,0)^T$ solves both $(P_{0.95})$ and $(P_1)$, respectively. But $\|x_{0.8}\|_0=3>\|x_{0.95}\|=2=\|x_0\|_0$. This shows that $p^*(A,b)<0.8$ in spite of the fact that $(P_{0.95})$ possesses the same unique solution as $(P_0)$.
\end{example}

\begin{figure}[h!]
 \centering
  \includegraphics[height=0.5\textwidth,width=0.6\textwidth]{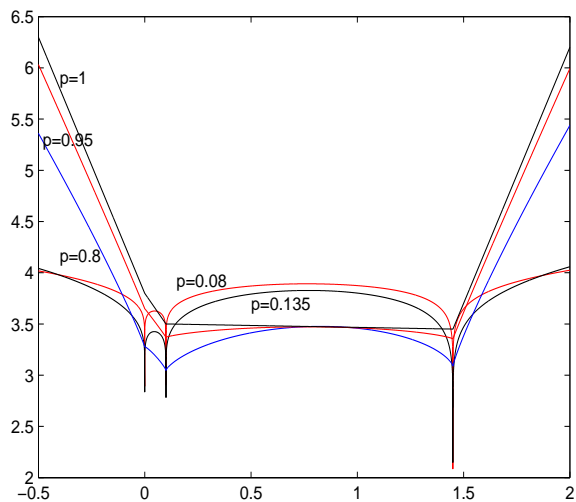}
 \caption{\small{Behaviours of the $l_p$-norms as the functions of $t$ for $p=0.1,0.135,0.8, 0.95$ and $1$, respectively.}}\label{fig3}
\end{figure}

Obviously, the above example also incidentally shows that in the whole interval $(0,1)$ of $p$ it is not true that the smaller $p$ is, the sparser the solutions to $(P_p)$ are.

\section{\textbf{Conclusions}}

Among the numerous substitution models for the $l_0$ minimization problem $(P_0)$, the $l_p$-norm minimizations $(P_p)$ with $0<p\leq 1$ have long been considered as the most natural choices. However, it has not been answered to what extent these models $(P_p)$ can replace $(P_0)$. In this paper, we have exposed the equivalence between $(P_0)$ and $(P_p)$, so completely answered the question mentioned. The established equivalence means that solving $(P_0)$ can be completely attacked by solving the convex minimization $(P_p)$ instead for some small $p$, while the latter is computable by some commonly used means at least for some special $p$. However, it should be pointed out that the main result obtained in this paper is only qualitative, it has not given quantitative characterization to the $NP/CLP$ equivalence constant, which the authors think is important for model choice and then for algorithm design. Altogether, the authors wish this paper can throw away a brick in order to get ages.

\vspace{8mm}
\noindent\textbf{Acknowledgements:}\ \ This work was supported by the NSFC under contact no. 11131006 and EU FP7 Projects EYE2E
(269118) and LIVCODE (295151), and in part by the National Basic Research Program of China under contact no. 2013CB329404.

\end{document}